\begin{document}

\title{Asset allocation: new evidence through network approaches}



\author{Gian Paolo Clemente      \and
        Rosanna Grassi      \and 
        Asmerilda Hitaj
}


\institute{G. P. Clemente\thanks{Corresponding author.}  \at
             Catholic University of Milan, Department of Mathematics, Finance and Econometrics. 
              \email{gianpaolo.clemente@unicatt.it}           
           \and
            R. Grassi   \at
              University of Milano - Bicocca, Department of Statistics and Quantitative Methods. 
              \email{rosanna.grassi@unimib.it}
             \and
             A. Hitaj   \at
           University of Milano - Bicocca, Department of Statistics and Quantitative Methods. 
           \email{asmerilda.hitaj1@unimib.it}           
}

\date{Received: date / Accepted: date}

\maketitle

\begin{abstract}
The main contribution of the paper is to employ the financial market network as a useful tool to improve the portfolio selection process, where nodes indicate securities and edges capture the dependence structure of the system. 
Three different methods are proposed in order to extract the dependence structure between assets in a network context. Starting from this modified structure, we formulate and then we solve the asset allocation problem. 
We find that the portfolios obtained through a network-based approach are composed mainly of peripheral assets, which are poorly connected with the others. These portfolios, in the majority of cases, are characterized by an higher trade-off between performance and risk with respect to the traditional Global Minimum Variance (GMV) portfolio. 
Additionally, this methodology benefits of a graphical visualization of the selected portfolio directly over the graphic layout of the network, which helps in improving our understanding of the optimal strategy. 
\end{abstract}

\vspace{4mm}
\noindent
\textbf{Keywords:}
Portfolio Selection, Networks, Global minimum variance, Dependence structure 
\newline
\textbf{JEL Classification: G11, C6}

\section{Introduction}
Modern portfolio theory, which originated with Harry Markowitz's seminal paper in 1952 (see \cite{markowitz52}) has stood the test of time and continues to be the intellectual foundation for real-world portfolio management. In this static framework, investors optimally allocate their wealth across a set of assets considering only the first and second moment of the returns' distribution. Despite the profound changes derived from this publication, the out-of-sample performance of Markowitz's prescriptions is often not as promising as expected. The poor performance of Markowitz's rule stems from the large estimation errors on the vector of expected returns (see \cite{merton80}) and on the covariance matrix (see \cite{jobson80}) leading to the well-documented error-maximizing property discussed in \cite{michaud2007estimation}. The magnitude of this problem is evident when we acknowledge the modest improvements achieved by those models specifically designed to tackle the estimation risk (see \cite{DeMiguel07}). Moreover, the evidence indicates that the simple yet effective equally-weighted portfolio rule has not been consistently out-performed by more sophisticated alternatives, as reported in \cite{bloomfield1977portfolio,DeMiguel07}.\\  
\noindent
The literature on portfolio selection has been extended on several directions. On one hand, some extensions modify the optimal problem, 
considering higher moments of returns' distribution (see \cite{martellini10} and the reference therein), exploiting alternative risk measures (see \cite{Campbell01,Uryasev02}) and as well as  utility functions (see \cite{tversky1992advances,HeCPT}), or proposing dynamic approaches (see among others \cite{Yin04,Brandt06}).  On the other hand, there is now a vast literature on how to deal with the problem, that focuses on improved estimation procedures, considering that the optimal allocation is very sensitive to the estimation of moments and co-moments\footnote{This sensitivity has generally been attributed to the tendency of the optimization to magnify the effects of estimation error. Michaud in \cite{michaud2007estimation} referred to \lq\lq portfolio optimization \rq\rq as \lq\lq error maximization \rq\rq. 
	Efforts to improve parameters estimation procedure include among others the papers \cite{ledoit03,martellini10}. Empirical analyses have shown that the use of improved estimators for moments and co-moments leads to higher out-of-sample performance compared to the sample estimation, see among others \cite{hitajBeta}.}. \\
Yet, portfolio selection problem has attracted the attention of operation research experts, in order to improve the Markowitz model (see \cite{Cesarone2013}, \cite{Boginski2014}).

We manage the issue under a different perspective, by proposing an optimization problem based on a network structure.
Precisely, we tackle the asset allocation problem considering our portfolio as a network in which firms/stocks are identified by nodes and the links represent the dependence structure of returns. In this way, we aim at exploiting the underlying structure of this financial market network as an effective tool in enhancing the portfolio selection process.\\ 
The use of network theory in financial framework is not completely new in the literature, although its use has been extensively developed over the last decade. Most of existing research using network theory concentrates on issues such as financial stability and contagion, specifically referring to systemic risk (for a recent survey we can refer to \cite{Neveu2018} or \cite{Caccioli2018}.) \\
As regard to  the application of network theory in portfolio selection process, Mantegna \cite{Mantegna99} proposed to build a network where edges' weights are inversely related to correlation. By computing a minimum spanning tree (MST), the author provides a way to reduce the network complexity in terms of number of edges. Onnela et al. \cite{Onnela2003} studied the time dependence of the ‘‘asset tree’’ in order to reflect the financial market taxonomy. In particular, the authors point out that the diversification aspect of portfolio optimization results in the fact that the assets of the classic Markowitz portfolio are always located on the outer leaves of the tree. Pozzi et al. \cite{pozzi13} investigate how financial filtered networks, based on MST and Planar Maximally Filtered Graphs (PMFG), can be used to characterize the heterogeneous spreading of risk across a financial market and how this information can be employed to reduce investment risk by constructing well-diversified portfolios. All these methods aim at identifying a portfolio of stocks, with a given cardinality, directly from the dependency structure provided by the financial filtered networks. 
Although these papers cover this innovative way, investigating  how  network topology could be used as an effective tool in  the portfolio selection process, they allow to reduce the complexity of the network but they do not provide a network-based criterion for assessing portfolios' optimal weights.\\
Peralta and Zaarei \cite{peralta16} establish a bridge between Markowitz's framework and network theory, showing a negative relationship between optimal portfolio weights and the centrality of assets in the financial market network. 
Boginski et al. \cite{Boginski2014} propose a method based on network clusters for selecting diversified portfolios. They formulate the portfolio selection problem as a particular clique relaxation problem in the market graph.
 \color{black} 
Our contribute to the literature is instead to propose a new network-based approach to deal with the problem of optimal asset allocation, which is alternative to classical portfolio strategies. 
In line with this kind of literature, the starting point is the use of correlated network, where edges' weights are directly related to linear correlation coefficients between each couple of assets. In order to develop an efficient investment strategy that can benefit from the knowledge of such market dependency structure, we optimize an objective function that takes into account, through an interconnectedness matrix, the community structure of each node in the system. We catch how much a node is embedded in the system, by adapting to this context the clustering coefficient, a specific network index, meaningful in financial literature to assess systemic risk (see \cite{tabak2014}, \cite{Minoiu2013}). In this way, we consider each stock in the portfolio by means of  two  major characteristics: its individual volatility and its interconnection with the system. This strategy acts differently from the classical GMV model that is instead based on the pairwise correlation between assets. \\
The proposed approach is then extended to account also for different dependence structures. Alternatively, a Kendall rank correlation network and a non-linear lower tail dependence network are defined. The optimal problem is then applied by considering two alternative interconnectedness matrices. 
\\

In order to validate our models we perform an empirical analysis where portfolios obtained through network structures will be compared with their analogue GMV. Moreover, the comparison is also extended to the equally weighted portfolio, that has attracted the attention of practitioners, staring with the paper \cite{DeMiguel07}. 
As a robustness check, the proposed approach is applied using empirical data of three different high dimensional portfolios and by testing alternative rolling windows. A first dataset considers the $266$ world's largest banks and insurance companies based on market capitalization. The investment universe of a second portfolio is the same as that of the \textit{S \& P} $100$ index. The analysis is also extended to hedge funds, by considering a third portfolio composed of  $64$ hedge funds downloaded from \lq\lq Hedge Fund Research dataset \rq\rq. \\
Furthermore, by means of in-sample and out-of-sample analysis, we investigate the extent to which the network-based approaches can be used to enhance the portfolio selection process.  \\
The results show that the network-based approaches perform very favourably compared not only to its competing approach (global minimum variance), but also to the \lq\lq equally weighted\rq\rq  strategy. In the majority of the cases, both long-term performances and risk-adjusted measures tend to provide the best behaviour at least for one of the network-based approaches. Furthermore, on average, also lower transaction costs are observed. \\
An empirical evidence of the inverse relation between the average clustering coefficient and the normalized herfindahl index is provided as well. In particular, a higher level of interconnection in the financial market leads to a lower portfolio diversification. \\
The paper is organized as follows. Section \ref{S:port_all} is devoted to the description of the asset allocation problem. Precisely, in Subsection \ref{PortGMV} the classical Global Minimum Variance approach is reminded . The structure of three dependence networks we deal with is explained in Subsection \ref{dep_network}. Furthermore, the formulation of the optimal problem we propose is outlined in Subsection \ref{net_decription}. In Section \ref{dataset_empirical}, the description of the datasets and of the empirical procedure used in this paper is reported. Section \ref{results} performs in-sample and out-of-sample analysis for both network-based approaches and classical conventional portfolio strategies. Although the analysis has been developed for all three datasets, for the sake of brevity  only the results referred to the first dataset are reported here and  all details concerning the other two datasets are provided in the supplementary material. Finally, Section \ref{conclusion} concludes and outlines future research lines.

\section{Asset allocation problems}
\label{S:port_all}
In this section we describe the asset allocation problems. At first, the classical portfolio model is reported, then a new approach is proposed using network theory, in order to find an optimal solution for the portfolio selection process.

\subsection{Global Minimum Variance}
\label{PortGMV}

The classical GMV \color{black} approach aims at determining the fractions $x_{i}$ of a given capital to be invested in each asset $i$
belonging to a predetermined basket of assets in order to minimize the risk of the portfolio, identified through its variance. \\
In particular, we assume that $N$ assets are available and we denote with $R_{i}$  the random variable (r.v.) of daily log returns. Let $\textbf{r}=[r_{i}]_{i=1,..,N}$ be the returns' vector observed in a specific time period$/$window ($w$) and $\bm{\Sigma}$ the variance and covariance matrix between assets estimated in the same period. We denote with $\sigma_i^2$ the $i$-th element on the principal diagonal of $\bm{\Sigma}$ and with $\sigma_p^2$ the variance of the portfolio.\\
The strategy, commonly known as GMV, can be defined as:
\begin{equation}
\label{gmv}
\left\{
\begin{array}{ll}
\smallskip
\min\limits_{\mathbf{x}} \ \ \ \sigma_p^2 = \textbf{x}^{T}\bm{\Sigma}\textbf{x}\\ 
\smallskip
\mathbf{e}^T\mathbf{x}=1\\
0 \leq x_i \leq 1, & \hbox{$i=1, \dots,N$,}
\end{array}
\right.
\end{equation}
 where $\mathbf{e}$ is a vector of ones of length $N$ and $\mathbf{x}=[x_{i}]_{i=1,...,N}$ is the vector of the fractions $x_{i}$ of the initial endowment invested in each asset. The first equation requires that the whole capital should be invested (i.e. budget constraint). Conditions $0 \leq x_i \leq 1$, $i=1, \dots,N$ exclude the possibility of short selling\footnote{As well-known, a closed-form solution of the GMV problem exists if short selling is allowed.}.  \\ 
 It is worth nothing that the dependence structure between assets plays a crucial role in a portfolio allocation model. Problem \eqref{gmv} measures the dependence through the Pearson correlation coefficient between each couple of firms, as proposed in \cite{markowitz52}. In this paper, we aim also at exploiting different kind of dependences. In particular, as described in the next Section, we deal with the interconnections between a specific firm and the the whole investment universe of the portfolio. 
 
In addition to classical portfolio theory, a specific strategy that can be considered consists in holding a portfolio having weight $\frac{1}{N}$ in each component (the so-called equally weighted portfolio (EW)). The EW strategy ignores completely the data and does not require any optimization or estimation procedure.
 

\subsection{Portfolio optimization through network approaches}
\label{networkApp}
In this section, the portfolio selection problem is tackled  within a network framework. The financial market is represented through a network, where assets are nodes and edges accounts for the dependence between returns. 
The developed investment strategy benefits from the knowledge of such market dependency structure. Unlike the classical global minimum variance model, based on the pairwise correlation between assets, we provide an objective function that takes into account the interconnectedness of the system. 
As we will see later, this interconnectedness is modelled by means of a specific structural indicator, the clustering coefficient, that considers how much an asset is connected to other assets in the portfolio. \\

Since our analysis is based on the level of dependence between assets, we propose to explore dependence at different \lq\lq observation scales\rq\rq. At first, we develop our study by using a classical linear correlation network (we refer, among others, to \cite{onnela03,giudici2016}). In this case, we will perform a \textit{Pearson Correlation network approach} (PNA) for portfolio optimization. We also make a step forward and we propose two alternative methods, \textit{Kendall Tau network approach} (KNA) and \textit{Tail dependence network approach} (TNA), where we focus on non-linear dependence between assets. \\    
First, the construction  of the dependency networks will be explained and then the portfolio selection through the network approach will be discussed. 

\subsubsection{Dependency network}
\label{dep_network}
A network $G=(V,E)$ consists by a set $V$ of $N$ nodes and a set $E$ of links (edges) between nodes of $V$. 
The edge connecting a pair of nodes $i$ and $j$ is denoted by $(i,j)$; $i$ and $j$ are called adjacent nodes.
If $(j,i)\in E$ whenever $(i,j)\in E$, the network is undirected.
The degree $k_{i}$ of a node $i$ $(i=1,...,n)$ is the number of edges incident to it.  
$G=(V,E)$ can be conveniently represented through its adjacency matrix $\mathbf{A}$ whose elements are $a_{ij}=1$ whenever $(i,j)\in E$ and $0$ otherwise. 
A network is weighted if a weight $w_{ij} \in \mathbb{R}$ is associated to each edge $(i,j)$. In this case, both adjacency relationships between vertices of $G$ and weights on the edges are described by a non negative, real $N$-square matrix $\textbf{W}$ (the weighted adjacency matrix). \\
Moving to our proposal, correlation network is the first case we deal with. 
Generally speaking, this kind of network belongs to similarity-based class of networks, where a weighted edge represents a similarity (but not necessarily a direct interaction) between the two nodes.
Despite a variety of the similarity measures, a widespread choice consists in quantifying the similarity between two elements (nodes) of the system with the Pearson correlation. Formally, a weighted, complete and undirected network $G$ represents the correlations structure, being all assets correlated, with edges' weight equal to the linear correlation between them.  Thus, the elements of the matrix $\mathbf{W}$ are defined as:
\begin{equation}
\label{rho}
w_{ij}=\begin{cases} \rho\left(R_{i},R_{j}\right) & \mbox{if }  i \neq j \\ 0 & \mbox{otherwise}
\end{cases},
\end{equation}
where $\rho\left(R_{i},R_{j}\right)$ is the linear cross-correlation\footnote{Notice that an ultrametric distance can be associated to the correlation coefficient in order to assure that weights range in a limited interval (see, for instance, \cite{Mantegna99,giudici2016,onnela03}). In our case, this transformation does not affect the results in terms of optimal portfolio.}, which captures the linear relationship between two random variable $R_{i}$ and $R_{j}$.\\
Furthermore, it is worth pointing out that in our model we consider all correlations between assets, obtaining then a complete structure. Some alternative approaches have been provided in literature, in order to reduce the complexity of the network. The simplest filtered network is a threshold network, where all correlation coefficients lower than a given threshold are discarded. A second class of filtering methods is based on either MST (see \cite{onnela03}) or PMFG (see \cite{pozzi13}), applied in order to extract only more relevant correlations. \\

As well-known, linear correlation provides only a partial view of assets' dependence, and several papers provided evidence of non-linear dependence in stock returns, as reported in \cite{hinich1985}. Hence, to extend our analysis, we  study two alternative dependency networks, with different choices of weights on the edges: Kendall Tau network and Tail dependence network. 
The Kendall rank correlation coefficient (commonly referred to as Kendall's tau coefficient) is a well-known measure of concordance 
for bivariate random vectors, while the Tail dependence coefficient provides asymptotic measures of the dependence in the
tails of the bivariate distribution. \\ 
Kendall Tau network is characterized by edges' weights based on Kendall's tau coefficient:
\begin{equation}
w_{ij}=\begin{cases} \tau\left(R_{i},R_{j}\right) & \mbox{if }  i \neq j \\ 0 & \mbox{otherwise}
\end{cases}.
\end{equation}
The Kendall's tau coefficient  is computed as:
\begin{equation}
\tau\left(R_{i},R_{j}\right)= \frac{\sum_{h=1}^{n}\sum_{k\neq h} sgn\left(r_{i,h}-r_{i,k}\right)sgn\left(r_{j,h}-r_{j,k}\right)}{n(n-1)} 
\end{equation}
where $\left(r_{i,1},r_{j,1}\right),\left(r_{i,2},r_{j,2}\right),...,\left(r_{i,n},r_{j,n}\right)$ is the set of $n$ observations of the joint random variables $R_{i}$ and $R_{j}$ in a specific time period. \\
This statistic is typically used to measure the ordinal association between two measured quantities. In other words, it computes the ratio of the difference between the number of concordant pairs and the number of discordant pairs to the total number of pair combinations. \\

With the Tail dependency network, we consider a weighted and undirected graph where edges' weights are equal to the lower tail dependences between each couple of assets\footnote{See Embrechts et al. \cite{embrechts2001} for a discussion about the concept of tail dependence in financial applications related to market or credit risk. A generalization of bivariate tail dependence, as defined above, to the multivariate case can be found in \cite{schmidt2006non}.}, in order to consider only co-movements in the lower tail of log-returns distributions:
\begin{equation}
w_{ij}=\begin{cases} \lambda_{l}\left(R_{i},R_{j}\right) & \mbox{if }  i \neq j \\ 0 & \mbox{otherwise}
\end{cases}.
\end{equation}
The lower tail dependence is defined as:
\begin{equation}
 \lambda_{l}\left(R_{i},R_{j}\right)= \lim_{q \rightarrow 0} P\left(R_{j}\leq F^{-1}_{R_{j}}(q)|R_{i}\leq F^{-1}_{R_{i}}(q)\right)
\end{equation}
where $F^{-1}_{R_{i}}$ is the inverse of the cumulative distribution function of random variable $R_{i}$. Lower tail dependence describes the limiting proportion that one marginal is lower than a certain threshold, given that the other marginals has already fallen below that threshold. The study of tail dependence is an interesting tool in order to quantify the amount of dependence in the tail.




\subsubsection{Problem formulation}
\label{net_decription}
%
We enter now into the methodology's explanation describing the optimization problem. For the sake of brevity, we outline the formulation of the problem following Pearson correlation network approach (PNA), but the procedure that can be followed for the other approaches (KNA and TNA) is essentially the same.

 \noindent
At first, the matrix $\bm{\Sigma}$ can be conveniently normalized by defining a new matrix:
$$\bm{\Omega}=\frac{\bm{\Sigma}}{\sum_{i=1}^{N}\sigma_{i}^{2}}.$$

It is easy to observe that $\bm{\Omega}=\bm{\Delta}^T \boldsymbol{\Pi} \boldsymbol{\Delta}$, where $\bm{\Pi}$ is the correlation matrix and $\bm{\Delta}=diag(s_{i})$ is a diagonal matrix with diagonal entries $s_i=\frac{\sigma_{i}}{\sqrt{\sum_{i=1}^{N}\sigma_{i}^{2}}}$. 
Clearly, the element $s_{i}$ considers the contribute of the standard deviation of the returns $i$ with respect to the total standard deviation, computed in case of independence. Indeed, the dependence effect is contained in the correlation matrix $\bm{\Pi}$.

\noindent Since variance matrix $\bm{\Sigma}$ is positive semidefinite, $\bm{\Omega}$ is also positive semidefinite, as $\bm{\Sigma}$ and $\bm{\Omega}$
share the same eigenvalues\footnote{By basic properties of the determinants, the eigenvalues of $\bm{\Omega}$ can be obtained by those of $\bm{\Sigma}$ by a multiplicative factor:
\begin{center}
$\det{(\bm{\Omega}-\lambda\bm{I})}=\det{\left(\frac{\bm{\Sigma}}{\sum_{i=1}^{N}\sigma_{i}^{2}}-\lambda{\mathbf{I}}\right)}=
\det{\left(\frac{\bm{\Sigma}-\left(\sum_{i=1}^{N}\sigma_{i}^{2}\right)\lambda\mathbf{I}}{\sum_{i=1}^{N}\sigma_{i}^{2}}\right)}=
\left(\frac{1}{\sum_{i=1}^{N}\sigma_{i}^{2}}\right)^n\det{\left(\bm{\Sigma}-\left(\sum_{i=1}^{N}\sigma_{i}^{2}\right)\lambda \mathbf{I}\right)}$.
\end{center}}. 


\noindent
Hence, problem (\ref{gmv}) can be reformulated as follows: 
\begin{equation}
\label{netC}
\left\{
\begin{array}{ll}
\smallskip
\min\limits_{\mathbf{x}} \ \ \ \textbf{x}^{T}\bm{\Omega} \textbf{x}\\ \smallskip
\mathbf{e}^T\mathbf{x}=1\\
0 \leq x_i \leq 1, & \hbox{$i=1, \dots,N$,}
\end{array}
\right.
\end{equation}


Being agents (financial institutions e.g. firms, banks, insurances or hedge funds) part of a complex system, where all assets are correlated, if we represent this correlation by means of the Pearson coefficient, then the complete, weighted and undirected network $G=(V,E)$ identifies the correlations' structure, where each pair of nodes/assets $i$ and $j$ is connected by an edge $(i,j)$, in which edges' weights are defined as in (\ref{rho}). 



To measure the level of interconnectivity of an asset with the whole  system, we use a clustering coefficient. The definition of clustering coefficient, introduced by Watts and Strogatz in \cite{watts1998collective}, refers to the number of existing triangles around a node respect to the number of potential ones, providing a quantity suitable to account for how the node is embedded into the structure, in terms of connections. This is extremely important in revealing system behaviours, such as for instance contagion propagation, and several papers (see, for instance, \cite{tabak2014,hu2012network}) argue that clustering coefficient is a good measure of systemic risk. 

However, being our graph complete, the classic clustering coefficient is not computable and we need to adapt the measure to our framework.
The procedure we follow is in line with \cite{mcassey2015clustering}. Specifically, we modify the adjacency matrix fixing a threshold $s\in[-1,1]$
and defining $\mathbf{A}_{s}$, whose elements are:
\begin{equation}
a_{ij}^s = \begin{cases} 1 & \mbox{if }  w_{ij} \geq s \\ 0 & \mbox{otherwise}
\end{cases}.
\end{equation}

\noindent $\mathbf{A}_s$ is the adjacency matrix describing the existing links in the network having weight $w_{ij}$ at or above the threshold $s$ (in this case, $w_{ij}=\rho(R_i,R_j)\geq s $). The idea is to capture the mean cluster prevalence of the network
looking at a zoom-out level where only the strongest edges (i.e.
greater than a given threshold) are visible. 

\noindent Hence, we compute the clustering coefficient proposed in \cite{watts1998collective}
and then we repeat the process, varying the threshold $s$. 
The clustering coefficient $C_{i}$ for a node $i$ corresponding to
the graph is the average of $C_{i}(\mathbf{A}_{s})$ overall
$s\in[-1,1]$:
\begin{equation}\label{clust_tot}
C_{i}=\int_{-1}^{1}C_{i}(\mathbf{A}_{s})ds
\end{equation}

Since $0\leq C_i\leq1$, $C_{i}$ is well-defined.  
Now, we define the $N$-square matrix $\mathbf{C}$, whose elements are:
\begin{equation}
c_{ij} = \begin{cases} C_{i}C_{j} & \mbox{if }  i \neq j \\ 1 & \mbox{otherwise}
\end{cases}.
\end{equation}

\noindent Since this matrix accounts for the interconnection level of each couple with the system, instead of measuring classical pairwise correlation, we can interpret it as an interconnectedness matrix. 

\begin{proposition}
Let $\mathbf{C}$ be defined as above and $\mathbf{H}=\bm{\Delta}^T\mathbf{C}\bm{\Delta}$. Then 
$\mathbf{C}$ and $\mathbf{H}$ are positive semidefinite.
If at least a couple of assets $k,j$ exists such that $\rho\left(R_{k},R_{j}\right)< 1$, then $\mathbf{C}$ and $\mathbf{H}$ are positive definite. 
\end{proposition}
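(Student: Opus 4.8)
The plan is to exploit the special additive structure of $\mathbf{C}$. Writing $\mathbf{c}=(C_1,\dots,C_N)^T$, I would first observe that $\mathbf{C}$ differs from the rank-one matrix $\mathbf{c}\mathbf{c}^T$ only on the diagonal: the $(i,i)$ entry of $\mathbf{c}\mathbf{c}^T$ is $C_i^2$, whereas $c_{ii}=1$. Hence
\[
\mathbf{C}=\mathbf{c}\mathbf{c}^T+\mathbf{D},\qquad \mathbf{D}=\mathrm{diag}\!\left(1-C_1^2,\dots,1-C_N^2\right).
\]
Since $0\le C_i\le 1$ (the property stated just after \eqref{clust_tot}), every diagonal entry of $\mathbf{D}$ is nonnegative, so $\mathbf{D}$ is positive semidefinite, and $\mathbf{c}\mathbf{c}^T$ is positive semidefinite as an outer product. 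Therefore $\mathbf{C}$, a sum of two positive semidefinite matrices, is positive semidefinite. For $\mathbf{H}$ I would argue by congruence: for every $\mathbf{x}$ one has $\mathbf{x}^T\mathbf{H}\mathbf{x}=(\boldsymbol{\Delta}\mathbf{x})^T\mathbf{C}(\boldsymbol{\Delta}\mathbf{x})\ge 0$, so positive semidefiniteness of $\mathbf{H}$ is inherited from that of $\mathbf{C}$.

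For the definiteness claim I would work directly with the quadratic form. From the decomposition,
\[
\mathbf{x}^T\mathbf{C}\mathbf{x}=(\mathbf{c}^T\mathbf{x})^2+\sum_{i=1}^{N}(1-C_i^2)\,x_i^2,
\]
a sum of two nonnegative terms. This can vanish for some $\mathbf{x}\ne\mathbf{0}$ only if $x_i=0$ for every index with $C_i<1$ and, simultaneously, $\mathbf{c}^T\mathbf{x}=0$. The key reduction is therefore that $\mathbf{C}$ is positive definite as soon as $C_i<1$ for all $i$, because then $\mathbf{D}$ is itself positive definite and the first term can only help.

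The main obstacle, and the step that genuinely uses the hypothesis, is to rule out $C_i=1$. Here I would return to the construction \eqref{clust_tot}: $C_i$ is the average over the threshold $s$ of the Watts--Strogatz coefficient $C_i(\mathbf{A}_s)$, so $C_i=1$ can hold only if $C_i(\mathbf{A}_s)=1$ for almost every $s\in[-1,1]$. For $s$ strictly larger than $\max_{j\ne i}\rho(R_i,R_j)$, node $i$ has no incident edges in $\mathbf{A}_s$, hence $C_i(\mathbf{A}_s)=0$ there; so if $\rho(R_i,R_j)<1$ for every $j\ne i$, this zero set has positive measure and forces $C_i<1$. The clean sufficient condition is thus that no asset is perfectly correlated with another, and this is precisely where the correlation assumption must be converted into a bound on the clustering coefficients. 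I expect this conversion to be the delicate part: the single-pair assumption in the statement has to be propagated to every node, since a perfectly correlated clique can keep several $C_i$ equal to $1$ (in which case the difference vector between two such nodes lies in the kernel of $\mathbf{C}$), so care is needed to ensure $\rho(R_k,R_j)<1$ actually yields $C_i<1$ throughout.

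Finally, once $\mathbf{C}$ is positive definite, positive definiteness of $\mathbf{H}=\boldsymbol{\Delta}^T\mathbf{C}\boldsymbol{\Delta}$ follows provided $\boldsymbol{\Delta}$ is invertible, i.e. $s_i=\sigma_i/\sqrt{\sum_k\sigma_k^2}>0$ for all $i$ (every asset has strictly positive variance): then $\boldsymbol{\Delta}\mathbf{x}\ne\mathbf{0}$ whenever $\mathbf{x}\ne\mathbf{0}$, and $\mathbf{x}^T\mathbf{H}\mathbf{x}=(\boldsymbol{\Delta}\mathbf{x})^T\mathbf{C}(\boldsymbol{\Delta}\mathbf{x})>0$. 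The routine linear algebra (the decomposition together with the two congruence arguments) is straightforward, and the conceptual weight of the proof lies entirely in translating the correlation condition into the strict bound $C_i<1$.
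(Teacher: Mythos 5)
Your positive semidefiniteness argument is the paper's own computation in a different packaging: the identity $\mathbf{x}^T\mathbf{C}\mathbf{x}=(\mathbf{c}^T\mathbf{x})^2+\sum_{i=1}^N(1-C_i^2)x_i^2$ is precisely the chain of inequalities in the paper's display (\ref{semidef}), with the slack term written out explicitly, and the congruence step for $\mathbf{H}$ is the paper's (implicit) argument as well. So for the semidefinite half you and the paper take essentially the same route.

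The definiteness half is where you genuinely depart, and your caution is not only legitimate but necessary: the paper's proof fails at exactly the step you flagged. The paper argues that since $\rho\left(R_{k},R_{j}\right)<1$, some threshold $\bar{s}$ deletes the edge $(k,j)$, and from this it asserts $C_i<1$ for all $i$. But removing only the edge $(k,j)$ from a complete graph leaves $C_k(\mathbf{A}_{\bar{s}})=C_j(\mathbf{A}_{\bar{s}})=1$, because the neighborhood of $k$ no longer contains $j$ and remains a clique (only third nodes $i\neq k,j$ see their clustering drop). Worse, the single-pair hypothesis of the proposition really is insufficient, as your "perfectly correlated clique" remark anticipates: take $R_1=R_2=R_3=X$ and $R_4=R_5=R_6=Y$ with $\rho(X,Y)=\tfrac{1}{2}$. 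At every threshold $s$ the graph $\mathbf{A}_s$ is either the complete graph $K_6$ or two disjoint triangles, so $C_i(\mathbf{A}_s)=1$ for every $i$ and every $s$, hence $C_i=1$ for all $i$, $\mathbf{C}=\mathbf{e}\mathbf{e}^T$ has rank one, and neither $\mathbf{C}$ nor $\mathbf{H}$ is positive definite, even though the hypothesis of the proposition holds (e.g. $\rho\left(R_{1},R_{4}\right)=\tfrac{1}{2}<1$). Your per-node criterion — $C_i<1$ whenever $\max_{j\neq i}\rho\left(R_{i},R_{j}\right)<1$, since node $i$ is then isolated in $\mathbf{A}_s$ (so $C_i(\mathbf{A}_s)=0$, by the standard convention for nodes of degree at most one) on a set of thresholds of positive measure — is correct and yields the right, repaired statement: positive definiteness holds when no asset is perfectly correlated with another, together with $\sigma_i>0$ for all $i$ so that $\bm{\Delta}$ is invertible, a hypothesis the paper also leaves implicit when passing from $\mathbf{C}$ to $\mathbf{H}$. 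In short, your proposal does not prove the proposition as stated, but that is because the proposition as stated is false; what you prove is the correct version of it, by an argument that fixes the paper's gap rather than reproducing it.
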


\begin{proof}
Being $0\leq C_i\leq1$, then, $\forall\mathbf{x}\in R^N$:
\begin{equation}\label{semidef}
\begin{split}
	\mathbf{x}^{T}\mathbf{C}\mathbf{x}=\sum_{i=1}^N x_i^2+\sum_{i,j}^NC_iC_jx_ix_j=
	\sum_{i=1}^N x_i^2+2\sum_{i<j}^NC_iC_jx_ix_j \geq \\
	 \sum_{i=1}^NC_i^2x_i^2+2\sum_{i<j}^NC_iC_jx_ix_j=\left( \sum_{i=1}^NC_ix_i\right)^2\geq 0,
\end{split}
\end{equation}
hence $\mathbf{C}$ is positive semidefinite. 

\noindent Suppose that $\rho\left(R_{k},R_{j}\right)< 1$ for a pair of assets $k,j$. By the definition of matrix $\mathbf{A_s}$, a value of the threshold $\bar{s}$ exists such that $a_{kj}^{\bar{s}}=0$. This implies that the edge $(k,j)$ of the corresponding graph is deleted.\\
Considering that the original graph is complete, from (\ref{clust_tot}), $C_i<1$ $ \forall i$. As a consequence, $x_i^2 > C_i^2x_i^2$ $\forall i$, yielding in (\ref{semidef}) to
$\mathbf{x}^{T}\mathbf{C}\mathbf{x}>\left(\sum_{i=1}^NC_ix_i\right)^2\geq 0$ $\forall \mathbf{x} \neq \mathbf{0}$, 
 concluding that the matrices $\mathbf{C}$, and hence $\mathbf{H}$, are positive definite.
\end{proof}

The optimal problem (\ref{netC}) can be modified as:

\begin{equation}
\label{Ouroptimalconst}
\left\{
\begin{array}{ll}
\smallskip
\min\limits_{\mathbf{x}} \ \ \ \textbf{x}^{T}\mathbf{H}\textbf{x}\\ \smallskip
\mathbf{e}^T\mathbf{x}=1\\
0 \leq x_i \leq 1, & \hbox{$i=1, \dots,N$.}
\end{array}
\right.
\end{equation}

\noindent Observe that the objective function is continuous in a feasible compact set, then at least one solution of the problem exists.
As in classical GMV, we are considering in problem (\ref{Ouroptimalconst}) both volatility of assets and the dependence structures between them. The main difference is due to the use of the interconnectedness matrix in order to consider how much each couple of assets is related to the system. In particular, being $\mathbf{C}$ dependent on a network-based measure of systemic risk (i.e. the clustering coefficient), we are implicitly including a measure of the state of stress of the financial system in the time period. 

\noindent As in \cite{markowitz52}, when short selling is allowed, we are able to provide a closed form solution given by $\mathbf{x}\mathbf{=}\frac{\mathbf{H}^{-1}\mathbf{e}}{\mathbf{e}^{T}\mathbf{H}^{-1}\mathbf{e}}$.

 \color{black}

\section{Dataset description and empirical protocol}\label{dataset_empirical}

In this section, we perform some empirical applications in order to assess the effectiveness of the proposed approaches. Since the aim is to test the robustness and to avoid data mining bias, we analyse three portfolios with different characteristics, arising from three different datasets.   

The first one consists of 266 among largest banks and insurance companies in the world. In particular, we take the greatest firms by market capitalization in the banking and insurance sector. Hence, the dataset contains daily returns of 120 insurers and 144 banks in the time-period ranging from January 2001 to the end of 2017. \\
In addition, we collected daily returns of a second dataset referred to the same time period, that includes $102$ leading U.S. stocks constituents of the $S\&P$ 100 index at the end of 2017. Both data have been downloaded from Bloomberg \cite{bloombe}.\\
\color{black}The third portfolio is composed by $64$ hedge funds downloaded from \lq\lq Hedge Fund Research Dataset\rq\rq and cover the period from\footnote{https://www.hedgefundresearch.com/hfr-database. Observations before than $1^{st}$ of April 2003 are not available for the hedge funds under analysis.} 01-January-2004 to the end of 2017. \\
Although the analysis has been performed for all three datasets, for the sake of brevity we emphasize here the developments referred to the first one, i.e. banks and insurers, and we provide all details concerning the other two datasets in the supplementary material.

In the literature, several criteria exists in order to evaluate portfolio strategies. In this article, we focus on portfolio diversification, on transaction costs and on risk-return performance measures, as will be explained in the following subsections. All these aspects are investigated through a rolling window methodology, which is characterized by an in-sample period of length $n$ and an out-of-sample period of length $h$. \\
Results related to diversification and risk-performance measure are obtained, respectively, by means of in-sample and out-of-sample analysis. The data of the first in-sample window of width $n$ (i.e. from $t=1$ to $t=n$) are used to estimate the optimal weights in the first window. Optimal weights are then invested in the out-of-sample period, from $t=n+1$ to  $t=n+h$, where the out-of-sample performance is computed.  
The process is repeated rolling the window $h$ steps forward. Hence, weights are updated by solving the optimal allocation problem in the new subsample, diversification and portfolio turnover are calculated (i.e. using data from $t=h+1$ to $n+h$ and discarding the first $h$ observations), and the performance is estimated once again using data from $t=n+h+1$ to $n+2h$. Repeating these steps until the end of the dataset is reached, we buy-and-hold the portfolios and we record out-of-sample performance in each rebalancing period.

\noindent
In this paper, we mainly consider monthly stepped two-year windows. 
This choice is a trade off between too noisy and too smoothed data for small and large window widths, respectively. However, in order to test the effect of a different window width $n$, we also perform a rolling window strategy with an in-sample-period of six months and an out-of-sample period of one month length. 

\subsection{In-sample analysis: diversification and transaction costs}
In an in-sample perspective we analyse the obtained portfolios in terms of diversification and transaction costs using, respectively, the modified Herfindahl index and  the portfolio turnover. The modified Herfindahl index on a window $w$ (denoted with $HI_w$) is a normalized version of classical Herfindahl index and it is formally defined as:
\begin{equation*}
	HI_w=\frac{\mathbf{x}_w^{\star^{T}}\mathbf{x}^{\star}_w-\frac{1}{N}}{1-\frac{1}{N}},
\end{equation*}
where $\mathbf{x}^{\star}_w=\left[x^{\star}_i(w)\right]_{i=1,\ldots,N}$ stands for the vector of optimal weights on window $w$. 
Because it is a normalized version of the classical index, it ranges from $0$ to $1$, being equal to $0$, in case of the EW portfolio (the most diversified portfolio) and to $1$ in case of a portfolio concentrated in only one asset. 

A proxy of transaction costs of a given strategy is the portfolio turnover at window $w$ ($\vartheta(w)$) computed as:

\begin{equation*}
	\vartheta(w)=\sum_{i=1}^N{\left|x^{\star}_i(w)-x^{\star}_i(w^{-}))\right|},
\end{equation*}
where $x^{\star}_i(w^{-})$ and $x^{\star}_i(w)$ are the optimal portfolio weights of the $i^{th}$ asset before and after rebalancing (i.e. according to the optimization strategy) at window  $w$, respectively.

\subsection{Out-of-sample analysis: performance measurements}
\label{PerfMeas}

In financial framework, risk-adjusted performance measures are used to compare the absolute returns or the relative returns (i.e. excess returns) to the risk taken.
In this paper, we examine the performance of 
the three datasets using the \textsl{Sharpe Ratio} $(SR)$, the \textsl{Information Ratio} $(IR)$ and the \textsl{Omega Ratio} ($OR$).
\newline
The \textsl{Sharpe Ratio} is defined as: 
\begin{equation*}
	SR=\frac{\mu_p^{\star}- \mu_f}{\sigma_p^{\star}},
	\label{Shar} 
\end{equation*}
where $\mu_p^{\star}$ and $\sigma_p^{\star}$ are  respectively the average return and the standard deviation of the optimal portfolio/strategy and $\mu_f$ indicates the average risk-free rate\footnote{For the sake of simplicity, we set the average risk-free rate at zero in the empirical analysis.}.
This ratio measures the mean  excess return per unit of overall risk.  Typically, the portfolio with the highest Sharpe ratio is regarded as the best portfolio according to this criterion. 
We use the Sharpe ratio to rank different portfolios only in case of positive portfolio excess return with respect to the risk free rate. Indeed, in line with the literature, in this case the Sharpe ratio is considered as an appropriate risk adjusted performance measure\footnote{In case of negative average portfolio excess return this measure is not appropriate and different modifications have been proposed in literature (see \cite{Scholz2007}).}. 

The rationale behind the \textsl{Information Ratio} is similar to that of the \textsl{Sharpe Ratio}, but the average portfolio return is compared with a benchmark portfolio return instead of the risk-free rate. \textsl{IR} is defined as: 
\begin{equation*}
IR=\frac{\mu_{\delta}}{\sigma_{\delta}},
\label{Shar} 
\end{equation*}
where  $\delta=r^{\star}_p-r^{\star}_{GMV}$. \\
$r^{\star}_p$ and $r^{\star}_{GMV}$ represent the out of sample time series of optimal portfolio returns corresponding to a strategy $p$ and to a GMV strategy respectively.
A value of the Information Ratio greater than zero means that strategy $p$ is over performing the benchmark.\\
Since network approaches represent viable alternatives to the classic GMV, a natural choice is to compare our results with GMV strategy. To this reason, we assume GMV as benchmark.  
Therefore it seems quite natural to measure how much the network-based methodology performs better (or worse) with respect to the classic problem (\ref{gmv}).\color{black}

\noindent
The \textsl{Omega Ratio} has been introduced by Keating and Shadwick in \cite{keating2002universal} and it is defined as:
\begin{equation*}
	OR = \frac{\int_{\epsilon}^{+\infty} (1-F(x))\,dx}{\int_{-\infty}^{\epsilon} F(x)dx}=\frac{\mathbb{E}\left(r^{\star}_p-\epsilon\right)^+}{\mathbb{E}\left(\epsilon-r^{\star}_p\right)^+},
\end{equation*}
where $F(x)$ is the cumulative distribution function of the portfolio returns  
and $\epsilon$ is a specified threshold\footnote{We point out that $OR$ ratio is very sensitive to values of $\epsilon$ which can be different from investor to investor. In the empirical analysis $\epsilon$ is set equal to $0$.}. 
\color{black} Returns below the specific threshold are considered as losses and returns above as gains. In general, an OR greater than 1 indicates that  strategy $p$ provides more expected gains than expected losses. The portfolio with the highest ratio will be preferred by an investor. The $OR$ implicitly embodies all the moments of the return distribution without any a-priori assumption. 

\section{Results and discussion}
\label{results}

As previously pointed out, the main contribution of this paper to the existing literature is the proposal of a network-based approach as an effective tool in enhancing the portfolio selection process. As explained in Section \ref{networkApp}, in order to capture the dependence structure between assets, we test three different approaches (PNA, KNA and TNA) and we provide a comparison, taking as benchmark the GMV approach. For the sake of completeness, the EW strategy is also considered.
\raggedbottom
\subsection{Banks and Insurers dataset with monthly stepped two-year rolling windows}

We initially focus on Banks and Insurers dataset where returns have been split by using monthly stepped two-year windows. To give a preliminary idea of the network approach, Figure \ref{F:BI2Ynet} depicts the correlation network obtained in the last window $w$ that covers the period December 2015-November 2017. As previously pointed out, each node represents a firm (bank or insurer) and the weighted edge $\left(i, j\right)$ measures the correlation between firms $i$ and $j$. In Figure \ref{F:BI2Ynet}, bullets size is proportional to the standard deviation of each firm, providing a visual picture of the contribution of each asset to  the matrix $\bm{\Delta}$. Edges opacity is instead proportional to weights. Hence, it shows the pairwise correlation coefficient between each couple of assets. \\
We worth mentioning that the estimation method applied for the correlation matrix is an important issue in building the financial network.  Of course, different estimation methods can be used, such as sample, parametric, shrinkage estimators, to mention a few. However, since the scope of the paper is not to analyse the different estimation techniques, in performing the empirical analysis we  use the sample estimation for the three dependence matrices. Building the network-based portfolio selection model with different estimation method is left for a future research. \\
As described in Section \ref{net_decription}, correlation network has been used in order to assess the community structure of each asset by means of a clustering coefficient. Hence, the matrix $\mathbf{H}$ can be computed and problem (\ref{Ouroptimalconst}) can be solved. We report in Figure
\ref{F:BI2Yopt} the optimal solution of PNA for the same window $w$ that covers the period December 2015-November 2017. In this network representation, size of bullets is instead proportional to allocated weight $x^*_{i}(w)$. We observe that the initial endowment is invested in only 26 firms, 10 banks and 16 insurance companies. However, roughly 94\% of the total amount is invested in insurers, that, in this time period, are characterized on average by both a lower volatility and a lower clustering coefficient. In particular, it is noticeable the case of two insurers (Nationwide Mutual Insurance Company and One America), that are characterized by the lowest standard deviations between the firms and by a high proportion of negative pairwise correlations (for instance, approximatively 90\% of correlations between Nationwide Mutual and other firms is lower than zero). As expected the optimal portfolio is based on a high proportion of the initial endowment invested in these two firms (54\% and 17\% respectively).

\begin{figure}[!h]
	\centering
	\includegraphics[scale=0.4]{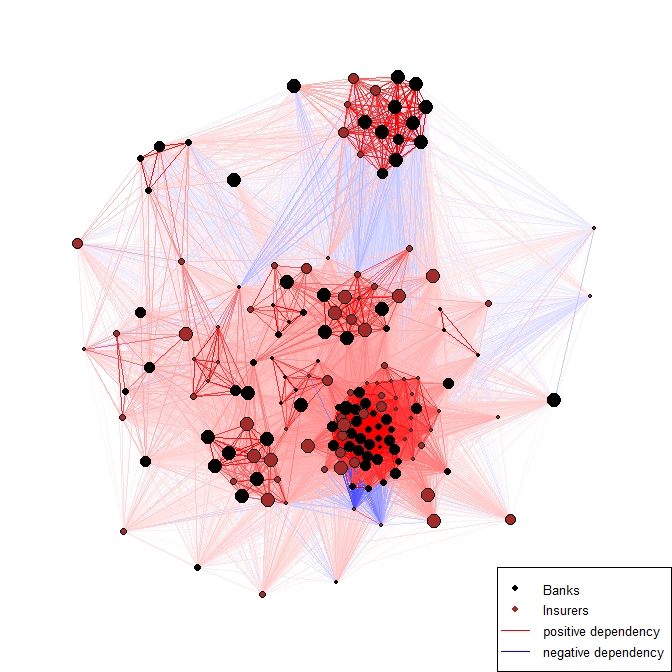}
	\caption{Pearson Correlation Network computed by using returns of Banks and Insurers dataset. Figure refers to the the last window $w$ that goes from the $1^{st}$ of December 2015 to the end of November 2017. Bullets size is proportional to the standard deviation of each firm. Edges opacity is proportional to weights (i.e., intensity of correlation).}
	\label{F:BI2Ynet}
\end{figure}

\begin{figure}[!h]
	\centering
	\includegraphics[scale=0.4]{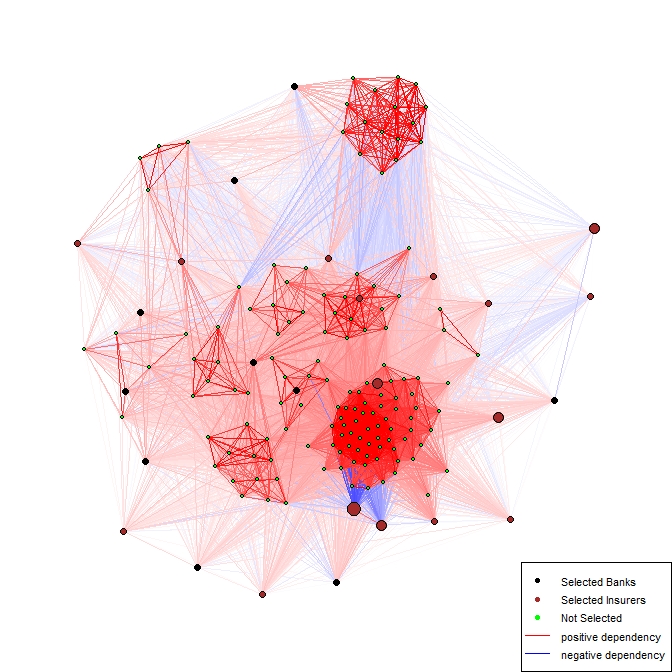}
	\caption{Optimal portfolio for the last in-sample period (December 2015-November 2017) computed by means of PNA. Bullets size is proportional to allocated weight $x_{i}$. Edges opacity is proportional to weights.}
	\label{F:BI2Yopt}
\end{figure}

Now, we perform rolling network-based approaches by considering 180 sub-samples (i.e. 180 rolling windows) of the initial dataset. By means of the optimal weights obtained by each approach, we are able to compute the portfolio diversification and the portfolio turnover. Moreover, by investing the optimal weights in the out-of-sample periods, we also provide different risk and return measures. Main results are reported in Figure \ref{F:BI2Y} and in Table \ref{T:BI2Y1M}.
Concerning the performance, we observe in Figure \ref{F:perBI2_1} that GMV provides the worst behavior in the period under analysis. Although PNA leads to almost the same amount at the end of the period, it always shows a performance higher than GMV, with a significant over-performance in the out-of-sample windows starting in 2006-2007 and since the end of 2009 to the end of 2014. KNA and TNA better catch the dependence structure, leading to an improvement of the performance over the long run. \\
Furthermore, it is interesting to note in Figure \ref{F:TunBI2_1} that network-based approaches are also characterized on average by a lower turnover. It means that allocated weights have been rebalanced less than GMV approach. In general, a strategy with a high turnover rate increases the costs to its investors. The importance of this index is obviously related to the cost for the turnover. \\
We also aim at making a sort of bridge between an in-sample indicator, the diversification index, and the network index we deal with (i.e. the clustering coefficient). We show in Figure \ref{F:ModHerBI2_1} how, on this specific dataset, the optimal network portfolios are more diversified than the GMV  one. Moverover, by comparing these patterns with the average clustering coefficients reported in Figure \ref{F:averclBI2_1} we observe that the higher is the average clustering coefficient the less diversified is the portfolio. This empirical evidence seems reasonable as high average clustering coefficients means high dependence which leads to a lower degree of diversification.

\begin{figure}[!h]%
	\centering
	\subfloat[Out-of-sample performance]{\label{F:perBI2_1}\includegraphics[scale=0.32]{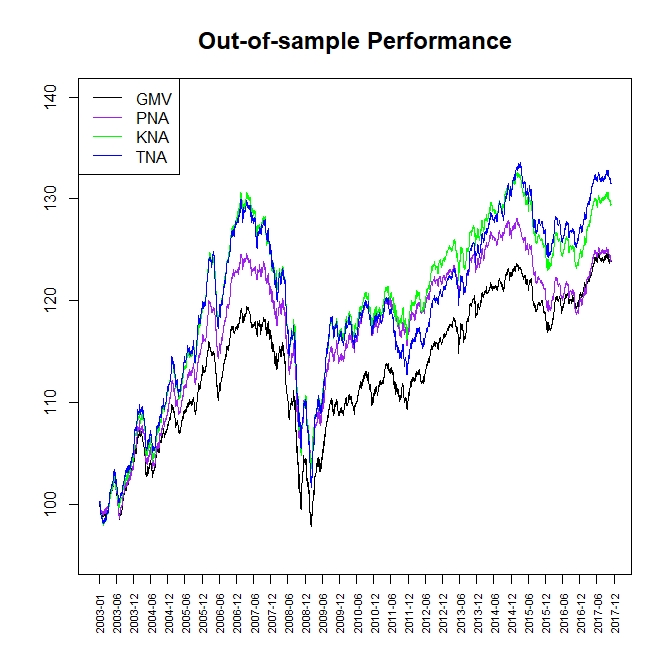}} \hfill
	\subfloat[Portfolio Turnover]{\label{F:TunBI2_1}\includegraphics[scale=0.32]{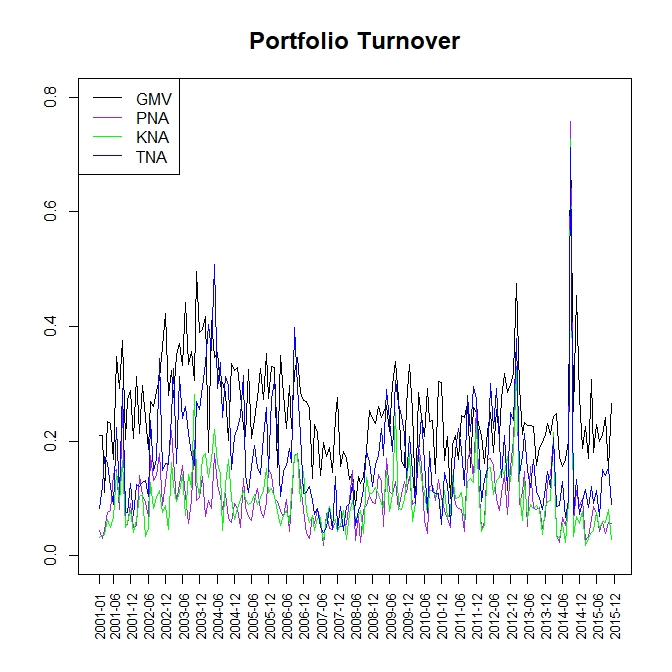}} \\
	\subfloat[Modified Herfindahl Index]{\label{F:ModHerBI2_1}\includegraphics[scale=0.32]{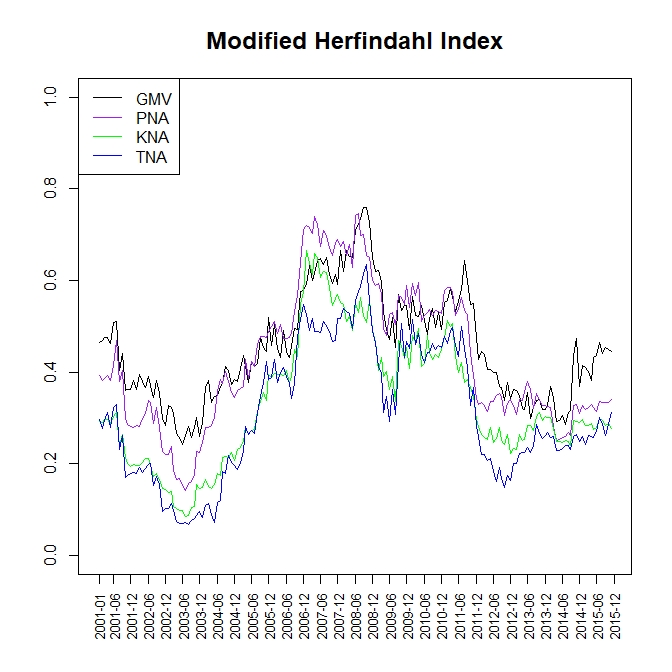}} \hfill
	\subfloat[Average clustering coefficients]{\label{F:averclBI2_1}\includegraphics[scale=0.32]{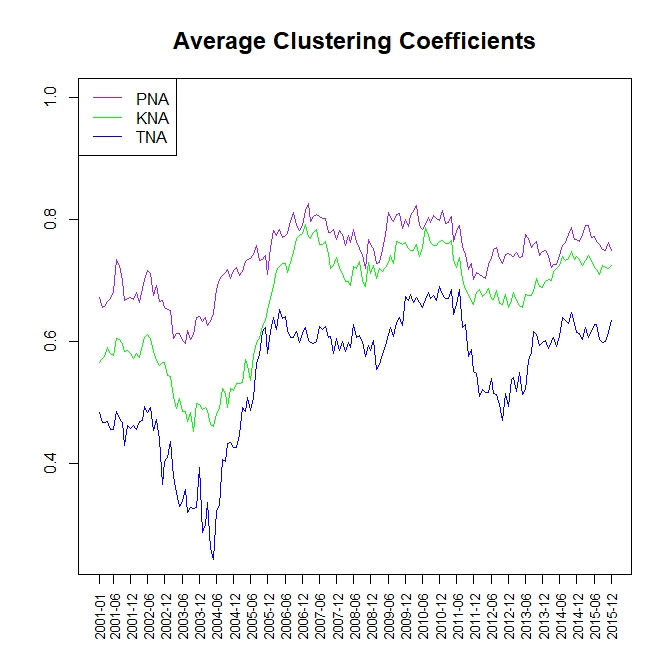}}
	\caption{Banks \& Insurers portfolio results, with 2 years in-sample and 1 month out-of-sample. The dates on the $x$-axes represent the initial dates of the in-sample (out-of-sample) windows $w$ }
	\label{F:BI2Y}
\end{figure}

To complete our analysis, we report in Table \ref{T:BI2Y1M} some basic out of sample statistics and the risk-adjusted performance measures
described in Section \ref{PerfMeas}. According to the sample statistics, it is often difficult to identify the best strategy. For instance, we observe that in general the network-based portfolios have a higher mean but, at the same time, a higher variance with respect to the GMV. This implies that investors with different risk attitude will prefer different portfolios. To overcome this issue we analyse the risk-adjusted performance measures.  The results obtained from Sharpe, Omega and Information Ratio indicate that the two network approaches KNA and TNA overperform the GMV. In particular, the approach, that implicitly includes the lower tail dependences of the bivariate distributions of returns, shows the best performance according to all three risk and returns measures. For the sake of completeness, we provide also the results obtained by using the EW strategy. As observed, this approach leads to a very high volatility without a significant improvement in terms of expected returns.

\begin{table}[!h]
	{\footnotesize \resizebox{0.9\textwidth}{!}{
			\centering
			\begin{tabular}{r|rrrrrrr}
				&    $\mu_p^{\star}$  &    $\sigma_p^{\star}$  &   skewness &   kurtosis &     $SR$ &      $OR$ & $IR$ \\
				\hline
				GMV &      0.015 &      0.029 &     -0.448 &      9.875 &      0.525  &      1.100 &            \\

				PNA &      0.015 &      0.030 &     -0.517 &      8.569 &     0.504  &      1.093 &      0.000 \\

				KNA &      0.018 &      0.031 &     -0.449 &      9.669 &   0.593 &      1.112 &      0.013 \\
				
				TNA&      0.019 &      0.030 &     -0.436 &      6.561 & 0.637 &      1.118 &      0.015 \\
				
				EW &      0.015 &      0.133 &     -0.725 &     22.606 & 0.117 &      1.026 &      0.000 \\
				\hline
			\end{tabular}  
			
	}}
	\caption{Out-of-sample statistics for Banks \& Insurers portfolio with rolling window 2 years in-sample and 1 month out-of-sample. The statistics $\mu_p^{\star}$  and    $\sigma_p^{\star}$ are reported on annual bases. 
	}
	\label{T:BI2Y1M}
\end{table}

\subsection{Alternative datasets and different rolling windows}

The Banks and Insurers portfolio has also been analyzed considering a different rolling window. In particular, we report main results obtained by using monthly stepped six months windows in Figure \ref{F:BI6M} and in Table \ref{T:BI6m1M}. \\
Regarding the performances, we observe that all the considered approaches provide worse results when a shorter rolling window is considered. 
However, also in this case, network-based approaches display a better performance than GMV and EW. Except for the period 2007-2009, we observe indeed a significant over-performance with respect to classical strategies. Furthermore, it is interesting to note that the different kind of dependence structure does not affect the performance in a significant way. This is partially justified by the similar pattern over time of the three alternative dependence measures we deal with. Indeed, it could be noticed in Figure \ref{F:BI6M_4} how the clustering coefficients derived by Pearson and Kendall networks are almost overlapping. Also the clustering coefficient, based on lower tail dependence, is closer to the other ones than in the case of a two-years rolling window. \\
We also observe that reducing the width $n$ of each window, we obtain clustering coefficients more stable over time and, when non-linear dependence is taken into account, higher, on average, than the previous case. As expected, these specific behaviours lead to a lower diversification (see Figure \ref{F:BI6M_3}). Furthermore, it is confirmed that network-based approaches tend to provide on average lower costs related to the portfolio turnover. 

\begin{figure}[!h]%
	\subfloat[Out-of-sample performance]{\label{F:BI6M_1}\includegraphics[scale=0.32]{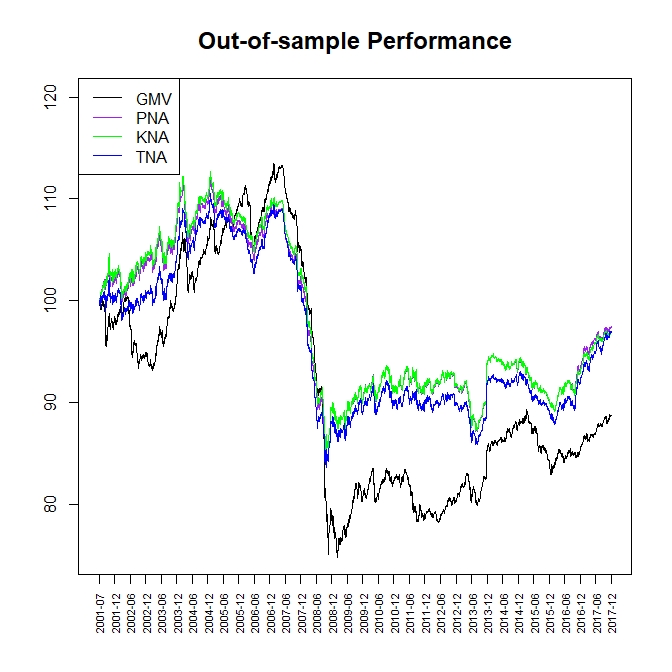}} \hfill
	\subfloat[Portfolio Turnover]{\includegraphics[scale=0.32]{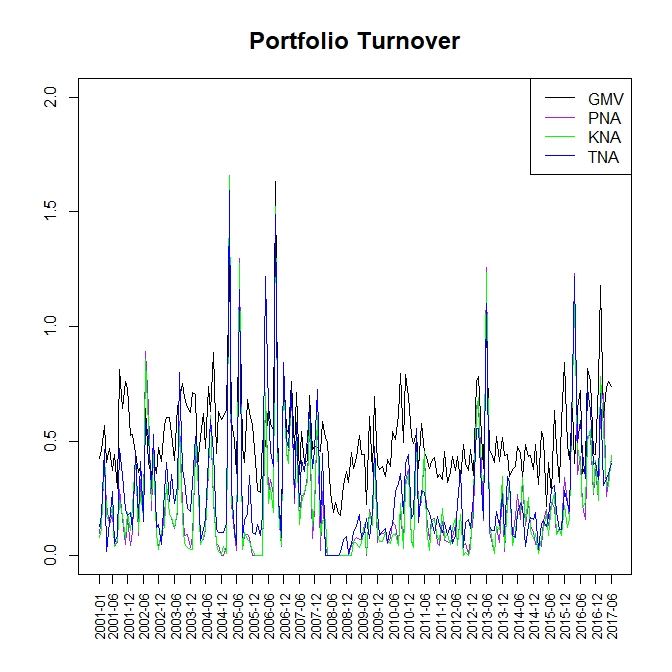}} \\
	\subfloat[Modified Herfindahl Index]{\label{F:BI6M_3}\includegraphics[scale=0.32]{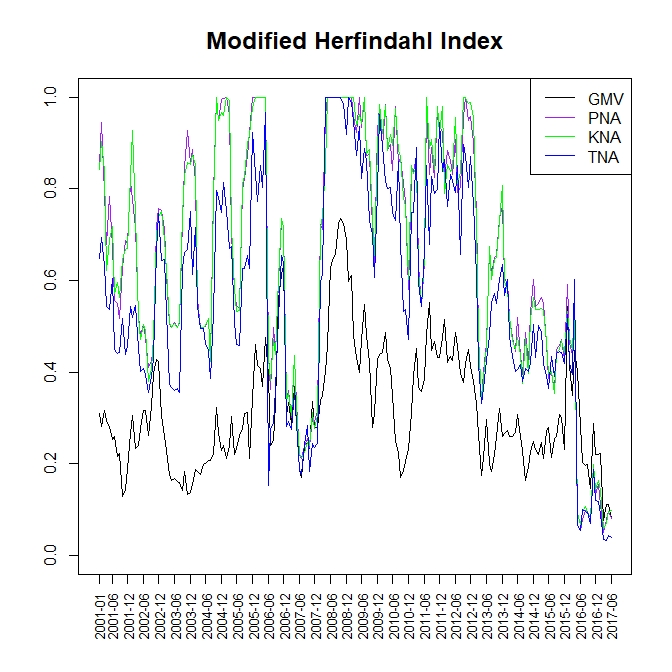}} \hfill
	\subfloat[Average clustering coefficients]{\label{F:BI6M_4}\includegraphics[scale=0.32]{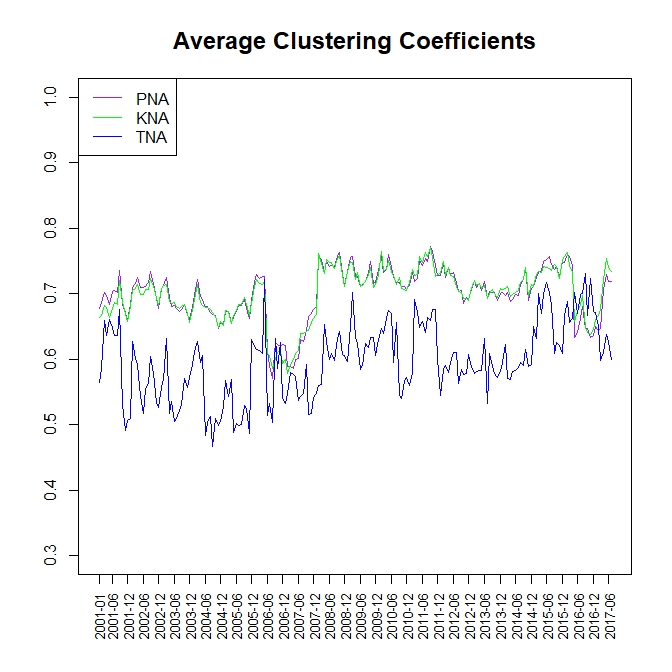}}
	\caption{Banks \& Insurers portfolio results, with 6 months in-sample and 1 month out-of-sample }
	\label{F:BI6M}
\end{figure}

\noindent
Concerning Table \ref{T:BI6m1M}, we have already underlined that, in case of negative excess returns, the Sharpe ratio is not an appropriate measure for ranking the portfolios. 
Hence, in Table \ref{T:BI6m1M} the Sharpe ratio is not reported and the portfolios are compared only in terms of $OR$ and $IR$. Both risk-adjusted performances confirm the over-performance of the three network-based portfolios, with a very close behaviour, with respect to both GMV and EW. Furthermore, as a consequence of Figure \ref{F:BI6M_1}, the Omega Ratio is lower than $1$ because the expected gains are lower than the expected losses.  \\
As already pointed out, the choice of the rolling window has a significant effect on the results. In this specific dataset, a shorter width $n$ seems not able to catch the differences in terms of dependence structure between the time periods. We notice that the lower diversification and the greater turnover do not appear as optimal solutions in this case, leading to worse performances with respect to Table \ref{T:BI2Y1M} for all the strategies. 

\noindent
These findings are in line with the results provided in \cite{cesarone2013learning}, where the authors, through empirical analysis, conclude that the best results for equities portfolios are obtained when the length of the in-sample window ranges between $1$ year and $2.5$ years.

\begin{table}[H]
	{\footnotesize \resizebox{0.8\textwidth}{!}{
			\centering
			\begin{tabular}{r|rrrrrr}
				&    $\mu_p^{\star}$  &    $\sigma_p^{\star}$  &   skewness &   kurtosis &          $OR$ & $IR$ \\
				\hline
				GMV &     -0.007 &      0.032 &     -0.920 &     11.774 &           0.961 &            \\

				PNA &     -0.001 &      0.033 &     -0.104 &      9.939 &           0.995 &      0.014 \\
				
				KNA &     -0.001 &      0.033 &     -0.114 &      9.787 &           0.993 &      0.014 \\
				
				TNA&     -0.001 &      0.032 &     -0.191 &      9.776 &          0.993 &      0.015 \\
				
				EW &     -0.019 &      0.122 &     -0.380 &     13.289 &            0.971 &     -0.007 \\
				\hline
			\end{tabular} 
	}}
	\caption{Out-of-sample statistics for Banks \& Insurers with rolling window 6 months in-sample and 1 month out-of-sample. The statistics $\mu_p^{\star}$  and    $\sigma_p^{\star}$ are reported on annual bases. Values of $SR$ are not reported because of $\mu_p^{\star}$ lower than zero.
	}
	\label{T:BI6m1M}
\end{table}

Now, we summarize main insights observed on the other two datasets (\textit{S\& P} and Hedge Funds) we have explored\footnote{We remind that detailed results are reported in the Supplementary Material}. The results obtained in case of the \textit{S\&P} 100 portfolio are in line with the previous dataset. In particular, independently from the rolling window strategy used, the network-based portfolios have higher out-of-sample performances than the benchmark one (GMV). It is noteworthy that, in this dataset, we obtain the highest average return by applying the EW strategy that does not depend on data. However, despite the best performance, EW is also characterized by a very high volatility and a by the lowest negative skewness. Hence, according to all three risk-adjusted measures, EW strategy appears as the worse choice. \\
Results are less affected by the width of the rolling window.  However also in this case, a longer in-sample period (2 years) leads to best performances and to higher values of the risk-adjusted measures. \\

As regard to the hedge fund portfolio, we obtain slightly different results. The width $n$ of the rolling window strategy is still important but differently to the equities portfolios, better results are obtained for shorter in-sample periods (i.e. six months). The effect of $n$ is in line with the results derived in \cite{hitajBeta}.  \\
However, also for this dataset, the network-based approaches over-perform the GMV strategy, in terms of both average performance and risk-adjusted measures. Finally, it is noticeable that the EW strategy appears as the best solution when a two-year rolling window is considered. 



\section{Conclusions}
\label{conclusion}

In this paper the portfolio selection problem is tackled by means of a network perspective. This work proposes the use of the network topology to catch the interconnectedness of each asset and to provide meaningful insights in portfolio selection process. In particular, we define a new optimal problem that considers both the individual volatility of each asset and how much each node is embedded in the system. The latter issue is caught by means of a local clustering coefficient. Being this coefficient widely recognized in the literature as a network-based measure of systemic risk, we are implicitly including a measure of the financial distress of the system. 
Hence, the proposed approach extends classical GMV strategy by considering a dependence structure based on correlation network, which is different to the classical pairwise linear correlation. \\
Additionally, we develop similar approaches by also considering alternative dependence measures. In this way, we provide a view of assets’ dependence at different observation scales by also including the effect of non-linear dependence in stock returns. \\
To test the robustness of our approaches, a numerical analysis, based on different dataset and alternative rolling windows, has been performed. 
The results obtained are promising, and show that the use of network-based portfolios, in the majority of the cases, leads to higher out-of-sample risk-adjusted performances compared to the global minimum variance and to the equally weighted portfolios. \\
The results obtained encourage the develop of further research, on  incorporating and extending the use of network theory on portfolio selection models. We believe that the network approach considered in this paper can be extended to the so called Smart Beta strategies such as \textsl{equally risk contribution} and \textsl{maximum diversified portfolio}, proposed in \cite{Maillard60} and \cite{Choueifaty} respectively,  where, instead of the covariance matrix, the dependence structure matrix of the network can be used.  
Moreover, it is evident that the estimation method of the dependence structure used to build  the network plays an important role. 
To this end, it is interesting to understand and analyse the impact of different estimation methods on the network-based obtained portfolios. 
We argue that the proposed network-based portfolio selection model could be a valuable contribution to the growing literature on high-dimensional portfolio allocation analysis.


\bibliographystyle{spmpsci}      
\bibliography{Myref}   

\end{document}